\documentclass[prd,preprint,nofootinbib]{revtex4} 
\usepackage{graphicx,amsmath,amsfonts,amssymb,revsymb,dcolumn,amsthm,epsf} 
\usepackage[utf8]{inputenc}

\newcommand{\be}{\begin{equation}}
\newcommand{\ee}{\end{equation}}
\newcommand{\bea}{\begin{eqnarray}}
\newcommand{\eea}{\end{eqnarray}}

\newtheorem{Def}{Definition}
\newtheorem*{theorem*}{Proposition}
\begin{document}

\title{Complete inequivalence of nonholonomic and vakonomic mechanics: rolling coin on an inclined plane}
\author{Nivaldo A.  \surname{Lemos}} \email{nivaldolemos@id.uff.br}
\affiliation{Instituto de Física, Universidade Federal Fluminense,
Campus da Praia Vermelha, Niterói, 24210-340, RJ, Brazil.}

\date{\today}


\date{\today}

\begin{abstract}
\noindent Vakonomic mechanics has been proposed as a possible description of the dynamics of systems subject to nonholonomic  constraints. The aim of  the present work is to show that for an important physical system the motion brought about by vakonomic mechanics is completely inequivalent to the one derived from nonholonomic mechanics, which relies on  the standard method of Lagrange multipliers in the d'Alembert-Lagrange formulation of the classical equations of motion. For the rolling coin on an inclined plane, it is proved that no nontrivial solution to the equations of motion  of nonholonomic mechanics can be obtained in the framework of vakonomic mechanics. This completes previous investigations that managed to show only that, for certain mechanical systems, some but not necessarily all  nonholonomic motions are beyond the reach of vakonomic mechanics. Furthermore, it is argued that a simple qualitative experiment that anyone can perform at home supports the predictions of nonholonomic mechanics.  
\end{abstract}

\maketitle

\section{Introduction}

The study of nonholonomic mechanical systems is more than a hundred and fifty years old, with an eventful  and captivating history \cite{Leon,Borisov2}.  Their many applications and intriguing aspects  explain the ongoing  interest of physicists, mathematicians and engineers in a seemingly outmoded subject.
The important distinction between holonomic and nonholonomic constraints  appears to have first been made by Hertz \cite{Hertz}, who also  coined the 
terms ``holonomic'' and ``nonholonomic''.

In classical mechanics one often encounters problems involving   rolling without slipping, such as that of a rolling coin on an inclined plane  \cite{Lemos} or  two wheels mounted on a rigid axle \cite{Wang}, and also those entailing  the requirement  of not skidding sideways, as in the case of  a skate 
\cite{Lemos,Neimark}, a bicycle \cite{Lemos,Neimark} or  a snakeboard
 \cite{Janova}.  As to applications in engineering,   systems subject to similar  constraints play a major role in the thriving field of robotics \cite{Novel,Seda}. 

The  restrictions that characterize the allowed motions of  systems such as those mentioned above are   set forth as velocity-dependent constraints. The equations that mathematically express these constraints are first-order ordinary differential equations that depend linearly on the velocities associated with the coordinates chosen to describe the configurations of the system. In virtually all cases these constraints are nonholonomic, and they are usually tackled by the method of Lagrange multipliers together with an appeal to virtual displacements and Hamilton's principle. The resulting equations of motion define the d’Alembert-Lagrange dynamics of nonholonomic systems or simply  ``nonholonomic mechanics''. 

The d’Alembert-Lagrange  mathematical model remained  unchallenged until about 40 years ago, when an alternative dynamics for  mechanical systems with velocity-dependent constraints  was advocated   
 by Kozlov \cite{Kozlov1,Kozlov} and endorsed by other Russian mathematicians \cite{Arnold}. 
This mechanics arises from the  assumption that the problem
of finding the equations of motion for nonholonomic systems is the same as the problem of Lagrange in the calculus of variations \cite{Bliss,Bliss2}. As a shorthand for ``mechanics of variational axiomatic kind'', 
the name ``vakonomic mechanics''  was created by
Kozlov for his way of viewing the dynamics of nonholonomic systems. Another common name for vakonomic mechanics is ``variational nonholonomic mechanics''.

During the last few decades, both   nonholonomic and vakonomic mechanics have been vigorously investigated, mostly by mathematicians, and often with the use of sophisticated    
geometrical methods \cite{Leon,Cardin,Borisov,Cortes1}. An interesting feature of vakonomic mechanics is that it can be cast in Hamiltonian form \cite{Arnold} and treated by the Dirac theory of constrained Hamiltonian systems \cite{Cortes2}.  Nonstandard transpositional relations between variation of derivative and derivative  of variation lead to a modified vakonomic mechanics \cite{LLibre}. Another topic that has been pursued, chiefly by mathematicians,  is the reduction of nonholonomic systems with symmetry, which consists in
 obtaining equations of motion with fewer coordinates than those one started with \cite{Koiller}.

Of course, a pressing  issue is that of the equivalence of the nonholonomic and vakonomic approaches to the dynamics of nonholonomic systems.  Sufficient conditions have been found under which the vakonomic and nonholonomic equations of motion for nonholonomic systems can be regarded as equivalent \cite{Cortes1,Favretti,Fernandez}. In spite of these mathematical results, the proposal of vakonomic mechanics to describe nonholonomic systems has been criticized  on physical grounds, from  theoretical as well as from  empirical viewpoints \cite{Kharlamov,Li,Zampieri,Lewis,Kai}.  On the theoretical side, it has been argued that vakonomic mechanics contains ``extra parameters'' with no mechanical meaning  and that Kozlov's ideas on the realization of constraints are open to doubt \cite{Kharlamov}; that  in the vakonomic treatment of
general nonholonomic systems the nonvanishing virtual work of constraint 
forces is unjustifiably omitted \cite{Li}; and that the vakonomic skate on an inclined plane displays unphysical behavior \cite{Zampieri}.  For the nonholonomic system consisting of a ball rolling on a rotating table, theoretical and experimental studies indicate that at least some actual motions performed by the ball cannot be brought about by  vakonomic mechanics \cite{Lewis,Kai}. 
 
To our knowledge, all investigations so far have only shown that for some nonholonomic systems vakonomic mechanics is partially inequivalent to  nonholonomic 
mechanics, meaning that some but not necessarily all nonholonomic motions are not vakonomic motions. Here we prove that, for  a rolling coin on an inclined plane, every nontrivial nonholonomic motion is beyond the reach of vakonomic  mechanics. Besides, we argue that a simple qualitative experiment that anyone can perform at home supports the predictions of nonholonomic mechanics. 

This paper is organized as follows. In Section \ref{Nonholonomic} we review the derivation of the d’Alembert-Lagrange dynamics of nonholonomic systems by the method of Lagrange multipliers. In Section \ref{Vakonomic} the vakonomic equations of motion are derived by the same variational technique as pertains to the problem of Lagrange in the calculus of variations, and some general physical features of vakonomic mechanics are discussed. In Section \ref{rollcoin} we set up the nonholonomic and vakonomic  equations 
of motion for a rolling coin on an inclined plane, and prove that vakonomic mechanics is unable to give rise to any nontrivial nonholonomic motion. Section \ref{Conclusions} is dedicated to  conclusions and a few broad remarks.


\section{Nonholonomic mechanics}\label{Nonholonomic}

In order to make our discussion as self contained as possible, let us briefly review the traditional derivation of the equations of motion of nonholonomic mechanics by the method of Lagrange mutipliers when the constraints are expressed as first-order linear differential equations for the coordinates. This approach relies in a crucial way on the notion of virtual displacement, which is an essential ingredient in d'Alembert's principle \cite{Lemos,Pars,Goldstein}.

Consider an $n$-degree-of-freedom mechanical system described by coordinates $q_1, \ldots , q_n$ subject to a set of the $p<n$ mutually independent differential constraints 
\begin{equation}
\label{vinculos-lineares-velocidades-diferenciais}
\sum_{k=1}^n\, a_{lk} dq_k  +  a_{lt}dt  = 0\, , \,\,\,\,\,\,\,\,\,\,
l=1,\ldots ,p\, .
\end{equation}
The coefficients $a_{lk}$ and $a_{lt}$ do not depend on the velocities, that is, they are functions of $q_1, \ldots , q_n, t$ alone.
In order to ensure the mutual independence of the above constraint equations  we assume that the $p \times n$  matrix $\boldsymbol{\mathsf{A}} = (a_{lk})$
has rank $p$ for all $q,t$. 

Let $L(q, {\dot q},t)$ be the Lagrangian for the system, {\it written as if there were no constraints}. The standard Lagrange multiplier method to take into account the constraints (\ref{vinculos-lineares-velocidades-diferenciais}) in Hamilton's principle \cite{Lemos} starts by noting that the variations $\delta q_k$ that enter Hamilton's variational principle   are virtual displacements that, because of
(\ref{vinculos-lineares-velocidades-diferenciais}), obey
\begin{equation}
\label{vinculos-lineares-velocidades-virtual}
\sum_{k=1}^n\, a_{lk} \delta q_k   = 0\, , \,\,\,\,\,\,\,\,\,\,
l=1,\ldots ,p\, 
\end{equation}
inasmuch as $dt =0$ for virtual displacements. 

On the assumption that Hamilton's principle $\delta S = 0$ remains  valid for nonholonomic systems, one writes
\begin{equation}
\label{deltaS=zero-nao-holonomo}
\delta S = \delta \int_{t_1}^{t_2}L dt = \int_{t_1}^{t_2}\, dt\, \sum_{k=1}^n \Biggl [ \frac{\partial L}{\partial q_k}
 -\frac{d}{dt}
\Bigl(\frac{\partial L}{\partial  {\dot q_k}}\Bigr) \Biggr]
\delta  q_k =0
\, ,
\end{equation}
where $\delta  q_k(t_1)= \delta  q_k(t_2)=0$. The statement that $\delta S = 0$ for all variations $\delta  q_k$ that satisfy (\ref{vinculos-lineares-velocidades-virtual}) is known as the d'Alembert-Lagrange principle.     

 Since the $\, \delta q$s are not mutually independent because of (\ref{vinculos-lineares-velocidades-virtual}),  one  cannot infer that the coefficient of each
$\delta q_k$ is zero.
In order to circumvent this difficulty one adds to   $\delta S = 0$ a special zero made up of  the left-hand-side of (\ref{vinculos-lineares-velocidades-virtual}) multiplied by an arbitrary function of time, a Lagrange multiplier ${\lambda}_l$, and summed over $l$. This procedure yields
\begin{equation}
\label{deltaS=zero-com-multiplicadores-de-Lagrange}
\int_{t_1}^{t_2}\, dt\, \sum_{k=1}^n \Biggl\{
 \frac{\partial L}{\partial q_k}
 -\frac{d}{dt}
\Bigl(\frac{\partial L}{\partial  {\dot q_k}}\Bigr)  + 
\sum_{l=1}^p\,{\lambda}_l a_{lk}  \Biggr\}
\delta  q_k =0
\, .
\end{equation}

Since  $\boldsymbol{\mathsf{A}}$ has rank $p$, with a suitable numbering of the coordinates  the first $p$ variations  $\delta q_1, \dots , \delta q_p$ can be expressed in terms of the remaining ones $\delta q_{p+1}, \dots , \delta q_n$, which are independent and arbitrary. Also because the rank of  $\boldsymbol{\mathsf{A}}$ is $p$, it is possible  to pick the $p$ Lagrange multipliers $\lambda_1(t), \ldots , \lambda_p(t)$ so   that the coefficient of each of the first $p$ variations $\delta q_1, \dots , \delta q_p$ in equation (\ref{deltaS=zero-com-multiplicadores-de-Lagrange})  is  zero. Having made this choice, and taking into account that the remaining $\delta q$s are independent and arbitrary, Hamilton's principle 
leads to the equations of motion \cite{Lemos,Whittaker}   
\begin{equation}
\label{equacoes-Lagrange-com-multiplicadores-de-Lagrange}
\frac{d}{dt}
\Bigl(\frac{\partial L}{\partial  {\dot q_k}} \Bigr) -
\frac{\partial L}{\partial q_k} =
\sum_{l=1}^p\,{\lambda}_l a_{lk}
\, , \,\,\,\,\,\,\,\,\,\,
k=1,\ldots ,n .
\end{equation}

Of course, the differential constraints (\ref{vinculos-lineares-velocidades-diferenciais}) can be written in the equivalent form of a set of $p$ first-order ordinary differential equations:  
\begin{equation}
\label{vinculos-lineares-velocidades-equacoes-diferenciais}
\sum_{k=1}^n\, a_{lk} {\dot q}_k  +  a_{lt}  = 0\, , \,\,\,\,\,\,\,\,\,\,
l=1,\ldots ,p\, .
\end{equation}

Equations  (\ref{equacoes-Lagrange-com-multiplicadores-de-Lagrange}) and (\ref{vinculos-lineares-velocidades-equacoes-diferenciais}) comprise a set  of $n+p$
equations for $n+p$ unknowns --- the $n$ coordinates $q_1, \ldots , q_n$ and the $p$ Lagrange multipliers $\lambda_1, \ldots , \lambda_p$ --- which under suitable conditions  uniquely determine the motion of the system. For a natural Lagrangian $L = T - V$, the potential energy $V$ depends only on the coordinates and the kinetic energy $T$ is a positive quadratic form in the velocities, hence the Hessian matrix $\boldsymbol{\mathsf{W}}= (\partial^2 L/\partial {\dot q}_i\partial {\dot q}_j)$ is positive. Then it follows \cite{Moriconi} that  equations (\ref{equacoes-Lagrange-com-multiplicadores-de-Lagrange}) and (\ref{vinculos-lineares-velocidades-equacoes-diferenciais}) can be uniquely solved for the Lagrange multipliers and accelerations in the form $\lambda_l = \mu_l(q, {\dot q},t)$,
${\ddot q}_k = u_k(q, {\dot q},t)$. As a consequence, the motion $q(t)$ is uniquely determined once initial conditions satisfying the constraints are stipulated. 
 
According to Whittaker \cite{Whittaker}, the above extension  of Hamilton's  principle to nonholonomic systems was first made by H\"older in 1896. One significant aspect of this derivation of the equations of motion based on an adaptation of Hamilton's principle  is that only the physical path obeys the constraints: the varied paths do not satisfy the constraints unless the constraints are holonomic \cite{Pars,Whittaker}. Besides,  the d'Alembert-Lagrange principle is not an actual variational principle because, except for  the case in which the  constraints are holonomic, there is no functional  ${\cal S} =\int_{t_1}^{t_2} {\cal L}(q, {\dot q},t) dt$ such that equations (\ref{equacoes-Lagrange-com-multiplicadores-de-Lagrange}) arise from the stationarity condition $\delta {\cal S} =0$.

The technique of Lagrange multipliers to derive equations of motion can be extended to nonholonomic  constraints of the more general  form
\begin{equation}
\label{non-holon-constraints-more-general}
g_l(q, {\dot q}, t) =0 \, , \,\,\,\,\,\,\,\,\,\,
l=1,\ldots ,p
\end{equation}
where $g_l: \mathbb{R}^{2n+1} \to \mathbb{R}$ are infinitely differentiable functions.
In this case, the previous arguments based on virtual displacements no longer work.
Obtained either from Hertz's principle of least curvature \cite{Rund}, from  a modified variational principle \cite{Saletan}, or from new transpositional
relations \cite{Flannery2}, to name a few approaches,
the accepted equations of motion are 
\begin{equation}
\label{nonholonomic-equations-of-motion}
\frac{d}{dt}
\Bigl(\frac{\partial L}{\partial  {\dot q_k}} \Bigr) -
\frac{\partial L}{\partial q_k} =
\sum_{l=1}^p\,{\lambda}_l \frac{\partial g_l}{\partial {\dot q}_k}
\, , \,\,\,\,\,\,\,\,\,\,
k=1,\ldots ,n\, .
\end{equation}
It is clear that these  reduce to the previous equations of motion (\ref{equacoes-Lagrange-com-multiplicadores-de-Lagrange}) when the functions $g_l$ depend linearly on the velocities.  Now we assume that the rectangular  matrix $\boldsymbol{\mathsf{A}}$ with elements  $a_{lk} = \partial g_l/ \partial {\dot q}_k$  has rank $p$ for all $q, {\dot q}, t$.  

\begin{Def}
A {\bf nonholonomic motion} is any solution $q(t)$ to equations  (\ref{non-holon-constraints-more-general}) and (\ref{nonholonomic-equations-of-motion}).
\end{Def}

\subsection{Constraint forces}

Given a system composed of $N$ particles, Newton's second law expressed in terms of arbitrary and independent coordinates $q_1, \ldots , q_n$ reads \cite{Lemos}
\begin{equation}
\label{eqs-Lagrange-emtermos-T-Qk}
\frac{d}{dt} \Bigl (
\frac{\partial T}{\partial {\dot q}_k} \Bigr )
- \frac{\partial T}{\partial  q_k}    =  Q_k \, , \,\,\,\,\,\,\,\,\,
k=1, \ldots ,n \, , 
\end{equation}
where  $n=3N$ and
\begin{equation}
\label{forcas-generalizadas}
Q_k =  \sum_{i=1}^N {\bf F}_i \, \cdot \frac{\partial {\bf r}_i}{\partial q_k}.
\end{equation}
$Q_k$ is, by definition, the $k$-th component of the   generalized force. In the last equation, ${\bf r}_i = {\bf r}_i(q,t)$ is the position vector of the $i$-th particle and  ${\bf F}_i$ is the total force acting on it.

The generalized force can be written as sum of a part that arises from a generalized potential $U(q,{\dot q},t)$ with  another part that cannot be derived from any generalized potential:
\begin{equation}
\label{forcas-potencial-generalizado-mais-nao-conservativas}
Q_k = - \frac{\partial U}{\partial q_k} + \frac{d}{dt} \Bigl (
\frac{\partial U}{\partial {\dot q}_k} \Bigr ) + Q_k^{\prime}. 
\end{equation}
With this decomposition, equation (\ref{eqs-Lagrange-emtermos-T-Qk}) becomes
\begin{equation}
\label{equacoes-de-Lagrange-mais-forcas-nao-conservativas}
\frac{d}{dt} \Bigl (
\frac{\partial L}{\partial {\dot q}_k} \Bigr )
- \frac{\partial L }{\partial  q_k}  =  Q_k^{\prime} \, , 
\end{equation}
with $L = T - U$. 
 
Since the generalized potential $U$ in the Lagrangian $L = T - U$ has nothing to do with the constraint forces, but refers to the applied forces alone, a comparison
of equations (\ref{nonholonomic-equations-of-motion}) and (\ref{equacoes-de-Lagrange-mais-forcas-nao-conservativas}) shows that the right-hand side of equation
(\ref{nonholonomic-equations-of-motion}) represents the constraint forces. Thus, 
in terms of the Lagrange multipliers, which are determined in the process of solving equations  (\ref{non-holon-constraints-more-general}) and (\ref{nonholonomic-equations-of-motion}),  the generalized constraint forces $Q^{\prime}_k$  responsible for the enforcement of (\ref{non-holon-constraints-more-general}) are given by 
\begin{equation}
\label{forcas-de-vinculo-em-termos-multiplicadores-Lagrange}
Q^{\prime}_k= \sum_{l=1}^p\,{\lambda}_l \frac{\partial g_l}{\partial {\dot q}_k} \, , \,\,\,\,\,\,\,\,\,\,
k=1,\ldots ,n\,  .
\end{equation}

\subsection{Virtual displacements}

In order for the d'Alembert-Lagrange formulation of the classical equations of motion for nonholonomic systems to be made compatible with the principle of virtual work, the total virtual work of the constraint forces must vanish, that is,
one must have
\begin{equation}
\label{virtual-work-zero}
 \sum_{k=1}^n\,Q^{\prime}_k \delta q_k =0
\end{equation}
for all  virtual displacements $\delta q_1, \ldots ,\delta q_n$. 

This means that for general constraints of the form (\ref{non-holon-constraints-more-general}) the  virtual displacements are to be restricted by \cite{Arnold,Saletan}
\begin{equation}
\label{virtual-displacements-general}
 \sum_{k=1}^n\,\frac{\partial g_l}{\partial {\dot q}_k} \delta q_k =0,  \,\,\,\,\,\,\,\,\,\, 
l=1,\ldots ,p .
\end{equation}
This condition ensures that the constraint forces (\ref{forcas-de-vinculo-em-termos-multiplicadores-Lagrange}) obey  equation (\ref{virtual-work-zero}) no matter what the values of the Lagrange multipliers may be. Furthermore, equation (\ref{virtual-displacements-general}) reduces to  (\ref{vinculos-lineares-velocidades-virtual}) when the constraints depend linearly on the velocities.

\section{Vakonomic mechanics}\label{Vakonomic}

It has long been known that the {\it physical} Lagrange multiplier method to deal with nonholonomic constraints  in classical mechanics 
that was succintly described above is not equivalent to the treatment of the corresponding {\it mathematical} problem of Lagrange  in the calculus of variations \cite{Bliss,Bliss2}. Applied to nonholonomic systems,  the latter gives rise to  vakonomic mechanics \cite{Ray,Goedecke,Kozlov1,Kozlov,Arnold},  which prominently features ${\dot \lambda_l}$   in the equations of motion in addition to the Lagrange multipliers $\lambda_l$ themselves. The equations of motion of vakonomic mechanics arise from the variational principle
 \begin{equation}
\label{vakonomic-variational-principle}
\delta \int_{t_1}^{t_2}\Bigl[ L(q, {\dot q},t) + \sum_{l=1}^p \lambda_l(t) g_l(q, {\dot q},t) \Bigr] dt = 0  ,
\end{equation} 
in which the $q$s and $\lambda$s are varied independently, with the standard boundary conditions $\delta q_k(t_1) =  \delta q_k(t_2) = 0$. Variation of the Lagrange multipliers  leads to the constraint equations (\ref{non-holon-constraints-more-general}), whereas variation of the coordinates yields the equations of motion 
\begin{equation}
\label{vakonomic-equations-of-motion}
\frac{d}{dt}
\Bigl(\frac{\partial L}{\partial  {\dot q_k}} \Bigr) -
\frac{\partial L}{\partial q_k} =
\sum_{l=1}^p\,{\lambda}_l \bigg[ \frac{\partial g_l}{\partial q_k} - \frac{d}{dt}
\Bigl(\frac{\partial g_l}{\partial  {\dot q_k}} \Bigr) \bigg] - \sum_{l=1}^p {\dot \lambda_l}\frac{\partial g_l}{\partial  {\dot q_k}}
\, , \,\,\,\,\,\,\,\,\,\,
k=1,\ldots ,n\, .
\end{equation}
These vakonomic equations of motion differ significantly from the nonholonomic equations of motion (\ref{nonholonomic-equations-of-motion}), most notably because of the  conspicuous appearence of the time derivative of the Lagrange multipliers in   addition to the Lagrange multipliers themselves. This means that in order to get a unique solution for the path $q(t)$ one must prescribe not only the initial values
of the coordinates and velocities but also  initial values ${\lambda}_1(0), \ldots , {\lambda}_p(0)$ for the  Lagrange multipliers. 

\begin{Def}
A {\bf vakonomic motion} is any solution $q(t)$ to equations  (\ref{non-holon-constraints-more-general}) and (\ref{vakonomic-equations-of-motion}).
\end{Def}

\begin{Def}
Vakonomic mechanics is said to be {\bf completely inequivalent} to nonholonomic mechanics for some mechanical system if the sets of vakonomic motions and nonholonomic motions are  disjoint.
\end{Def}

The first important distinction between the vakonomic variational principle and the d'Alembert-Lagrange extension of Hamilton's principle discussed in Section \ref{Nonholonomic} is that in the  vakonomic variational principle not only the actual path but also the comparison paths obey the constraints. The second important difference is that  equations (\ref{vakonomic-equations-of-motion}) arise from the genuine  variational principle $\,\delta\int_{t_1}^{t_2} {\cal L}(q, {\dot q},t) dt =0\,$ where
$\,{\cal L} = L + \sum_{l} \lambda_l g_l\,$ with $\lambda_1, \ldots , \lambda_p$ regarded as additional coordinates, and  in which the $q$s and $\lambda$s
 are varied independently.

For holonomic constraints
\begin{equation}
\label{holonomic-constraints}
f_l(q,t) =0, \,\,\,\,\,\,\,\,\,\, l=1,\ldots ,p,
\end{equation}
we have 
\begin{equation}
\label{g-holonomic-constraints}
g_l(q, {\dot q},t) = \frac{d}{dt} f_l(q,t) = \sum_{j=1}^n \frac{\partial f_l}{\partial q_j}{\dot q}_j + \frac{\partial f_l}{\partial t},
\end{equation}
and it is easily  checked  that
\begin{equation}
\label{g-variational-derivative-zero}
 \frac{d}{dt}
\Bigl(\frac{\partial g_l}{\partial  {\dot q_k}}\Bigr) -\frac{\partial g_l}{\partial q_k}  =0.
\end{equation}
This reduces the vakonomic equations of motion  (\ref{vakonomic-equations-of-motion}) to
\begin{equation}
\label{vakonomic-equations-of-motion-holonomic}
\frac{d}{dt}
\Bigl(\frac{\partial L}{\partial  {\dot q_k}} \Bigr) -
\frac{\partial L}{\partial q_k} =
 - \sum_{l=1}^p {\dot \lambda_l}\frac{\partial g_l}{\partial  {\dot q_k}}
\, , \,\,\,\,\,\,\,\,\,\,
k=1,\ldots ,n\, .
\end{equation}
Since only ${\dot \lambda_l}$ appear in these equations but not the Lagrange multipliers $ \lambda_l$ themselves, the mere renaming  ${\dot \lambda_l} \to -\lambda_l$ shows that the equations of motion of  vakonomic mechanics coincide with (\ref{nonholonomic-equations-of-motion}) if the constraints are holonomic. The converse of this statement  also holds \cite{Rund,Lewis}. Therefore,
the set of vakonomic motions is equal  to the  set of nonholonomic motions if and only if the constraints are holonomic.

Just like Sgt. Pepper's band, equations (\ref{vakonomic-equations-of-motion}) have been going in and out of style. Although they had appeared in the last decade of the nineteenth century in the works of Hertz, H\"older and others \cite{Arnold}  in connection with the issue  of  applicability of Hamilton's principle to nonholonomic systems,    throughout the twentieth century the vakonomic equations of motion  never made into mainstream textbooks. They do not appear even in advanced texts on analytical mechanics such as Whittaker's authoritative treatise \cite{Whittaker}, in which only constraints that depend linearly on the velocities are considered, with the dynamics  governed by  (\ref{equacoes-Lagrange-com-multiplicadores-de-Lagrange}). Equations (\ref{vakonomic-equations-of-motion}) were rediscovered in the nineteen sixties \cite{Ray,Goedecke}, soon to be  disowned by one of its proponents \cite{Ray2}. Having gone down into obscurity, in the nineteen eighties they came to the fore again in the work of Kozlov \cite{Kozlov1,Kozlov,Arnold}, who came up with the name ``vakonomic mechanics'' as a shorthand for ``mechanics of variational axiomatic kind''. At the dawn of the twenty-first century, equations (\ref{vakonomic-equations-of-motion}) were sanctioned in the third edition of Goldstein's acclaimed textbook \cite{Goldstein}, but retracted without explanation by the new co-authors almost as soon as the book came to light \cite{Goldstein2}.

As discussed in the previous section, according to vakonomic mechanics the constraint forces should be given by the right-hand side of Eq. (\ref{vakonomic-equations-of-motion}), namely
\begin{equation}
\label{vakonomic-constraint-forces}
{Q^{\prime}_k}^{vak} =
\sum_{l=1}^p\,{\lambda}_l \bigg[ \frac{\partial g_l}{\partial q_k} - \frac{d}{dt}
\Bigl(\frac{\partial g_l}{\partial  {\dot q_k}} \Bigr) \bigg] - \sum_{l=1}^p {\dot \lambda_l}\frac{\partial g_l}{\partial  {\dot q_k}}.
\end{equation}
It is clear that, with the virtual displacements defined by (\ref{virtual-displacements-general}), the total virtual work of these vakonomic constraint forces does not vanish. 
Sometimes the second term alone on the right-hand side of the above  equation, the one containing the ${\dot \lambda}$s, is interpreted as the constraint force. Then the $\lambda$s themselves are regarded as free parameters that can be
conveniently chosen in order to force the system to follow a
prescribed path in configuration space \cite{Antunes}. This interpretation would make vakonomic mechanics compatible with the principle of virtual work, but it does
not seem to be tenable for at least two reasons. First, one would have to unmotivatedly abandon Newtonian mechanics and replace it by a new mechanics containing hidden parameters, namely the initial values of the Lagrange multipliers. Second, this interpretation suggests that by appropriately choosing these  hidden parameters one can make the system follow any prescribed path, which is not true, as  shown in this paper.

If  the constraints are nonholonomic, equations   (\ref{vakonomic-equations-of-motion}) and (\ref{nonholonomic-equations-of-motion}) 
are not equivalent and their respective sets of solutions are  different. This does not necessarily mean, however, that each and every nonholonomic motion  is beyond the reach of vakonomic mechanics. For some choice of the initial values of the Lagrange multipliers, the corresponding  vakonomic motion  might coincide with  some  nonholonomic motion. Mathematical conditions are known 
\cite{Cortes1,Favretti,Fernandez} under which equations (\ref{vakonomic-equations-of-motion}) and (\ref{nonholonomic-equations-of-motion}) can be regarded as equivalent for certain initial conditions and initial values of the Lagrange multipliers. In the following we show that these equivalence conditions do not cover at least one important physical system, namely the rolling coin on an inclined plane.

\section{Rolling coin on an inclined plane}\label{rollcoin}

\begin{figure}[t!]
\epsfxsize=10cm
\begin{center}
\leavevmode
\epsffile{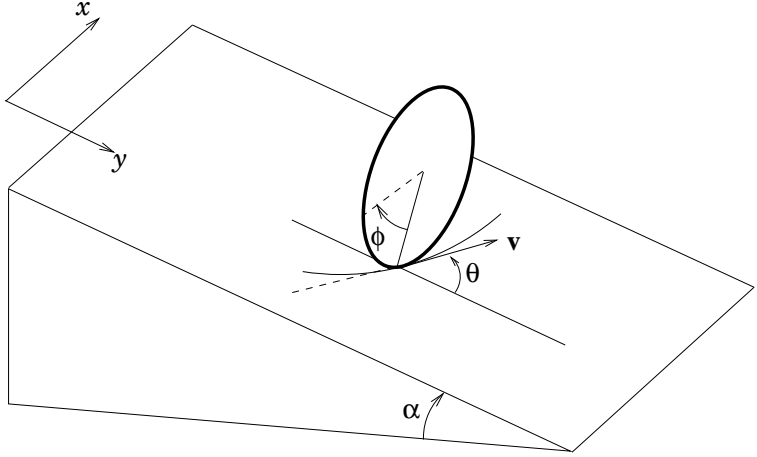}
\caption{Upright coin rolling without slipping on an inclined plane. The $x$-axis is horizontal and $\bf v$ is the center-of-mass velocity. Figure reproduced from the author's book, Ref. \cite{Lemos}.}
\label{moedaplanoinclinado}
\end{center}
\end{figure}

Let us consider \cite{Lemos} a coin of mass $\, m\,$ and radius $\, R\,$ that rolls on an  inclined plane always remaining upright --- its plane is always perpendicular to the inclined plane (see  Fig. \ref{moedaplanoinclinado}). Principal axes through the center of mass are any two orthogonal axes lying in the plane of the coin, with the third axis perpendicular to the plane of the coin. An elementary calculation gives  $\, I_3 =mR^2/2\,$ and, by symmetry and the perpendicular axis theorem,  $\, I_1=I_2=I_3/2=mR^2/4\,$. The kinetic energy  is  the center-of-mass  translational energy plus the kinetic energy of rotation about the center of mass. Therefore, the Lagrangian is  
\begin{equation}
\label{lagrangiana-moeda-rolante-em-termos-eixos-principais-inercia}
L= T-V= \frac{m}{2}({\dot x}^2 +{\dot y}^2) + \frac{1}{2}(I_1\omega_1^2 + I_2\omega_2^2 + I_3\omega_3^2) 
+ mgy\sin\alpha\, , 
\end{equation}
where $\, \alpha\,$ is the  angle of the slope of the inclined plane  and $\, x,y\,$ are Cartesian coordinates of the center of mass of the coin (note that the $x$-axis is horizontal). Let $\, \phi\,$ and
$\, \theta \,$ be angles describing the  rotations of the coin about its symmetry axis parallel to the inclined plane (the third principal axis) and an axis perpendicular to the inclined plane,  respectively, as shown in  Fig. \ref{moedaplanoinclinado}. Consequently, $\,\omega_3={\dot\phi}\,$
and, taking account that   $\,{\dot\theta}\,$ is the projection of the angular velocity  vector on the  plane of the coin,  $\,\omega_1^2+\omega_2^2={\dot\theta}^2$. Thus, the Lagrangian for the system takes the form

\begin{equation}
\label{lagrangiana-moeda-rolante}
L= \frac{m}{2}({\dot x}^2 +{\dot y}^2) + \frac{mR^2}{4} {\dot\phi}^2 + 
\frac{mR^2}{8} {\dot\theta}^2 + mgy\sin\alpha\, . 
\end{equation}

In Fig. \ref{moedaplanoinclinado} the center-of-mass velocity $\bf v$ is depicted at the point of contact between the coin and the table to emphasize that it is tangent to the path traced out by the contact point on the table. The instantaneous velocity  of the contact point is zero, which implies $v = R {\dot \phi}$. Since ${\dot x}=v\sin \theta$ and ${\dot y}=v\cos \theta$, the  rolling constraints are 
\begin{subequations}
\label{constraints}
\begin{align}
\label{constraints-x}
{\dot x} - R {\dot \phi}\sin \theta = 0  , \\
\label{constraints-y}
{\dot y} - R {\dot \phi}\cos \, \theta = 0 . 
\end{align}
\end{subequations}
By means of the Frobenius theorem, it can be readily shown that these constraints are nonholonomic \cite{Lemos}.

\subsection{Rolling coin dynamics according to nonholonomic mechanics}

Setting $q=(x,y, \phi, \theta )$, from  Eqs. (\ref{constraints}) one immediately identifies the two constraint functions  as
\begin{subequations}
\label{constraint-functions}
\begin{align}
\label{constraint-functions1}
g_1(q,{\dot q},t) = {\dot x} - R {\dot \phi}\sin \theta , \\
\label{constraint-functions-2}
g_2(q,{\dot q},t) = {\dot y} - R {\dot \phi}\cos \, \theta  . 
\end{align}
\end{subequations}
Therefore, the nonholonomic equations of motion (\ref{nonholonomic-equations-of-motion}) take the form 
\begin{eqnarray}
 m{\ddot x} & = & \lambda_1\, ,\label{moeda-rolante-equacao-movimento-a}\\
 m{\ddot y} & = & mg\sin\alpha + \lambda_2\, ,\label{moeda-rolante-equacao-movimento-b}\\
 \frac{mR^2}{2}\, {\ddot \phi} & = & -\lambda_1 R\sin\theta - \lambda_2 R \cos\theta .
\label{moeda-rolante-equacao-movimento-d} \\
 \frac{mR^2}{4}\, {\ddot \theta}  & = & 0, \label{moeda-rolante-equacao-movimento-c}
\end{eqnarray}

From (\ref{moeda-rolante-equacao-movimento-c}) it  immediately follows that  
\begin{equation}
\label{moeda-rolante-solucao-theta}
\theta =  \Omega \, t + \theta_0
\end{equation}
where  $\, \Omega \,$ and $\, \theta_0\,$ are arbitrary  constants.
Combining equations (\ref{constraints-x}) and (\ref{moeda-rolante-equacao-movimento-a}) we get 
\begin{equation}
\label{moeda-rolante-lambda1}
\lambda_1 = mR{\ddot \phi}\sin\theta + mR \Omega {\dot \phi} \cos\theta \, ,
\end{equation}
and, proceeding similarly,
\begin{equation}
\label{moeda-rolante-lambda2}
\lambda_2 = mR{\ddot \phi} \cos\theta - mR \Omega {\dot \phi}\sin\theta  - mg 
\sin\alpha
\, .
\end{equation}
Substitution of these expressions for  $\, \lambda_1\,$ and 
$\, \lambda_2\,$ into (\ref{moeda-rolante-equacao-movimento-d}) leads to the following differential equation for  $\, \phi$:
\begin{equation}
\label{moeda-rolante-equacao-para-phi}
{\ddot \phi}= \frac{2g \sin\alpha}{3R} 
\cos ( \Omega \, t + \theta_0 )
\, .
\end{equation}
The general solution to this equation is 
\begin{equation}
\label{moeda-rolante-solucao-para-phi}
 \phi= \phi_0 + \omega t - \frac{2g \sin\alpha}{3\Omega^2 R} 
\cos ( \Omega \, t + \theta_0 ),
\end{equation}
where, of course, we assume that $\Omega \neq 0$.
Finally, substituting (\ref{moeda-rolante-solucao-theta}) and  (\ref{moeda-rolante-solucao-para-phi}) into the constraint  equations  (\ref{constraints-x}) and (\ref{constraints-y}) and integrating, we find
\begin{eqnarray}
& & x= x_0 + \frac{g\sin\alpha}{3\Omega}\, t - \Biggl[ \frac{\omega 
R}{\Omega} + \frac{g\sin\alpha}{3\Omega^2} \sin(\Omega \, t + \theta_0)\Biggr] \, \cos ( \Omega \, t + \theta_0 )
\, , \label{moeda-rolante-solucao-para-x} \\
& &  y= y_0 +  \Biggl[ \frac{\omega 
R}{\Omega} + \frac{g\sin\alpha}{3\Omega^2} \sin ( \Omega \, t + \theta_0 )\Biggr] \sin ( \Omega \, t + \theta_0 )
\, . \label{moeda-rolante-solucao-para-y}
\end{eqnarray}
This completes the integration of the equations of motion for the upright rolling coin in terms of the six arbitrary constants $\, x_0,y_0,\theta_0,\phi_0,\omega,\Omega\,$. Note that the general solution to the equations of motion contains only six arbitrary constants because each nonholonomic constraint reduces
the number of degrees of freedom by one-half unit.

Surprisingly, equations (\ref{moeda-rolante-solucao-para-x}) and (\ref{moeda-rolante-solucao-para-y}) assert that  the coupling between translation and rotation causes the horizontal motion  of the center of mass of the coin to be  unbounded, but the motion along the slanting $\, y$-axis to be bounded and oscillating: if the inclined plane is sufficiently long the coin will never reach its lowest edge. 

Before moving on to the vakonomic dynamics, it is worthwhile to consider the nonholonomic motion associated with the following  initial conditions:
\begin{subequations}
\label{initial-conditions}
\begin{align}
\label{initial-conditions1}
x(0)= 0, \,\,\,\,\,\,\,\,\,\,  {\dot x}(0)= 0, \,\,\,\,\,\,\,\,\,\, y(0)= 0, \,\,\,\,\,\,\,\,\,\,  {\dot y}(0)= 0, \\
\label{initial-conditions2}
\phi (0)= 0, \,\,\,\,\,\,\,\,\,\,  {\dot \phi}(0)= 0, \,\,\,\,\,\,\,\,\,\, \theta (0)= \frac{\pi}{2}, \,\,\,\,\,\,\,\,\,\,  {\dot \theta}(0)= \Omega  . 
\end{align}
\end{subequations}
Clearly these initial conditions satisfy the constraints   (\ref{constraints}), 
and equations  (\ref{moeda-rolante-solucao-theta}), (\ref{moeda-rolante-solucao-para-phi}), (\ref{moeda-rolante-solucao-para-x}) and (\ref{moeda-rolante-solucao-para-y}) give rise to the following motion for the coin's center of mass:
\begin{eqnarray}
& & x(t) =  \frac{g\sin\alpha}{3\Omega^2}\, \bigl( \Omega t - 2 \sin \Omega  t + \sin \Omega  t\, \cos  \Omega  t \bigr)
\, , \label{moeda-rolante-solucao-para-x-initial-conditions} \\
& &  y(t) =\frac{g\sin\alpha}{3\Omega^2}\, \bigl( 1 - \cos  \Omega  t \bigr)^2 .
 \label{moeda-rolante-solucao-para-y-initial-conditions}
\end{eqnarray}
Fig. (\ref{coinpath}) shows two typical trajectories for the coin's center of mass predicted by equations (\ref{moeda-rolante-solucao-para-x-initial-conditions}) and (\ref{moeda-rolante-solucao-para-y-initial-conditions}).
\begin{figure}[b!]
    \centering
    \begin{minipage}{.5\textwidth}
        \centering
        \includegraphics[width=0.7\linewidth, height=0.3\textheight]{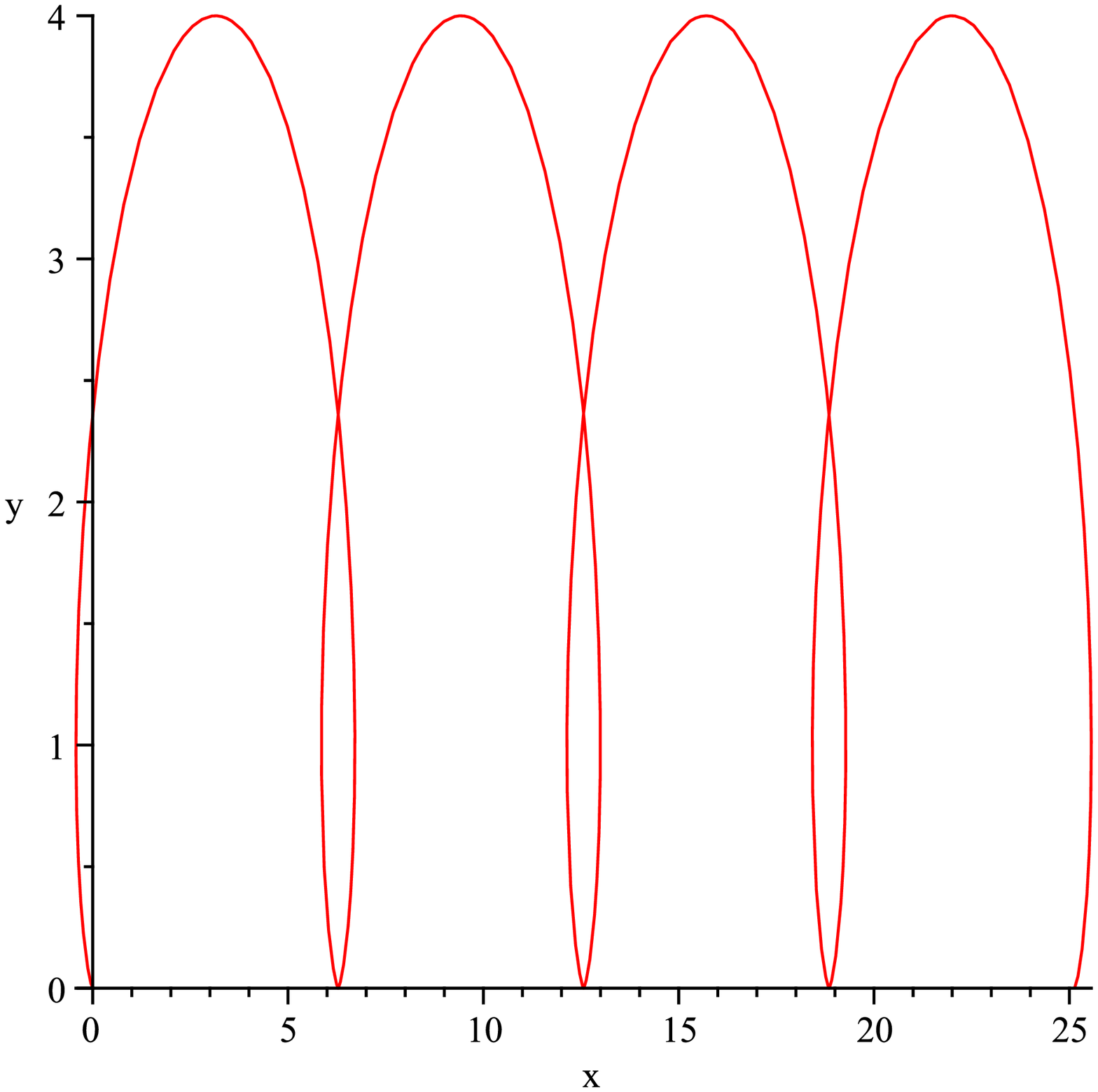}
    \end{minipage}%
    \begin{minipage}{0.5\textwidth}
        \centering
        \includegraphics[width=0.7\linewidth, height=0.3\textheight]{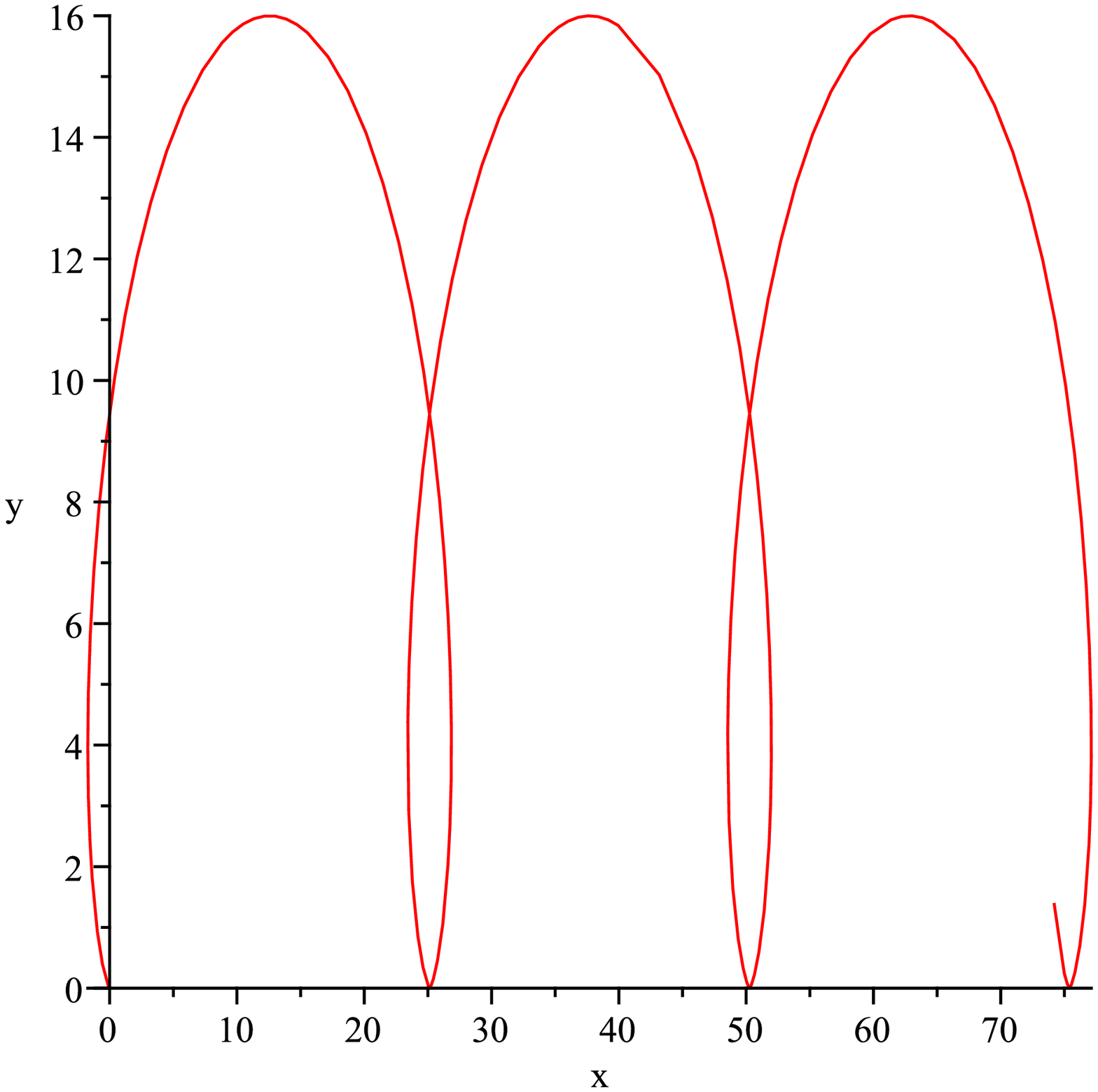}
    \end{minipage}
		\caption{Typical trajectories of the  rolling coin's center of mass according to nonholonomic mechanics for the initial conditions (\ref{initial-conditions}).   The angular velocity  for the left  graph is $\Omega$, whereas it is  $\Omega /2$ for the one on the right. The coordinates $x,y$ are depicted in units of $g \sin \alpha /3\Omega^2$. }
\label{coinpath}		
\end{figure}

The particular initial conditions (\ref{initial-conditions}) have been chosen because they are easy to realize in a qualitative home experiment. The first step consists in creating a gently inclined plane with the help of two books. 
Use  a finger of one hand to keep the coin upright, with its plane perpendicular to the slightly slanting edge of the inclined plane. Then release the coin at the same time as you flick its rim with a finger of the other hand, as in Fig. \ref{InitialCoin}. Qualitatively, the motion of the coin's center of mass resembles the ones shown in Fig. \ref{coinpath}, although a few trials may be needed to produce such a motion because the coin should be rolling without slipping from the start. One feature that is unequivocally borne out by this domestic experiment is that the coin does not fall all the way down, but reverses its downward motion and climbs back up the inclined plane, in violation of what intuition suggests. This behavior, which is also displayed by a skate \cite{Flannery3}, is sometimes regarded as paradoxical \cite{Arnold}, and invoked to   suggest  that nonholonomic mechanics leads to unphysical behavior. The paradox seems to come from the fact that  nonholonomic dynamics predicts that ``...on
the average the skate does not slide off from the inclined plane''. For this reason, so the argument goes, one would be justified in   studying vakonomic mechanics as a viable competitor to nonholonomic mechanics. Accordingly, it has been suggested that the choice between vakonomic and nonholonomic mechanics in each concrete case can be answered only by experiment \cite{Arnold}. 
As a matter of fact, it is vakonomic mechanics that gives rise to unphysical behavior for a skate on an inclined plane: for  horizontal initial center-of-mass velocity and no initial rotation, in a neighborhood of $t = 0$ the skate goes up the inclined plane \cite{Zampieri}.

Rotational dynamics is full of surprises and rich in examples that violate our intuitive expectations, cases in point being the tippe top \cite{Cohen},  the spinning cylinder \cite{Jackson}, the asymmetric top \cite{Mecholsky} and the rattleback  \cite{Jones}. Intuition may fail even in  elementary rigid body dynamics \cite{Lemos3}. Although the motion predicted by nonholonomic mechanics for the coin's center of mass is counterintuitive, it is actually realized in Nature.

\begin{figure}[t!]
\epsfxsize=10cm
\begin{center}
\leavevmode
\epsffile{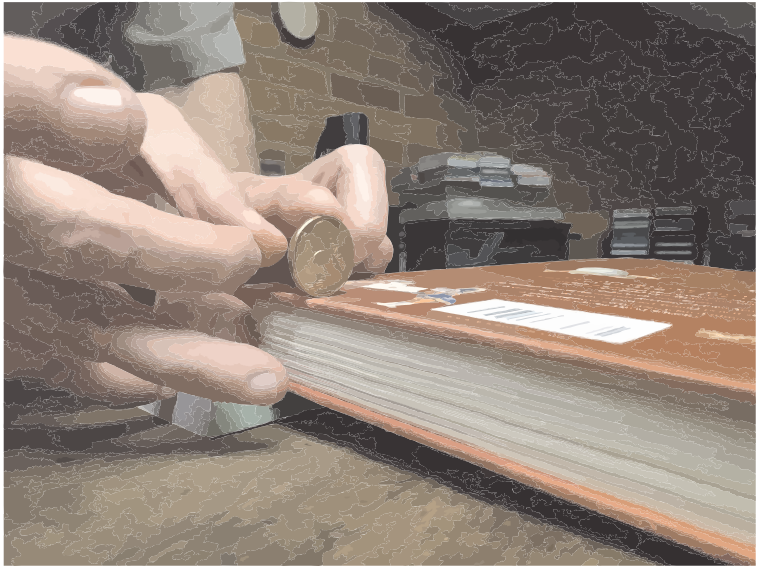}
\end{center}
\caption{Upright coin about to be flicked at the rim with a finger and set in motion with the initial conditions (\ref{initial-conditions}).}
\label{InitialCoin}
\end{figure}

\subsection{Rolling coin dynamics according to vakonomic mechanics}

From the Lagrangian (\ref{lagrangiana-moeda-rolante}) and the constraint functions 
(\ref{constraint-functions}) it follows that the vakonomic equations of  motion (\ref{vakonomic-equations-of-motion}) become
\begin{eqnarray}
 m{\ddot x} & = & - {\dot \lambda_1} ,\label{moeda-rolante-equacao-movimento-vakonomic-x}\\ 
 m{\ddot y}  & = &  mg\sin\alpha - {\dot \lambda_2} ,\label{moeda-rolante-equacao-movimento-vakonomic-y}\\
 \frac{mR^2}{2}\, {\ddot \phi} & = & \lambda_1 R{\dot \theta}\cos\theta - \lambda_2 R{\dot \theta}\sin\theta +  {\dot \lambda_1} R \sin\theta +  {\dot \lambda_2} R \cos\theta   , 
\label{moeda-rolante-equacao-movimento-vakonomic-phi} \\
 \frac{mR^2}{4}\, {\ddot \theta} & = &  -\lambda_1 R{\dot \phi}\cos\theta  + \lambda_2 R{\dot \phi}\sin\theta  . 
\label{moeda-rolante-equacao-movimento-vakonomic-theta}
\end{eqnarray}
These equations of motion differ in substantial aspects from the nonholonomic equations of motion. The most striking  discrepancy is between equations (\ref{moeda-rolante-equacao-movimento-vakonomic-theta}) and (\ref{moeda-rolante-equacao-movimento-c}). 

Can some   nonholonomic motion be brought about by equations (\ref{moeda-rolante-equacao-movimento-vakonomic-x}) to (\ref{moeda-rolante-equacao-movimento-vakonomic-theta})? We proceed to show that this question is  answered in the negative. 

\subsection{Complete inequivalence of  vakonomic and nonholonomic mechanics}

The following is the main result of this paper. 

\begin{theorem*}
Except for the trivial case $\theta = \mbox{constant}$, in which the constraints (\ref{constraints}) are actually  holonomic,   vakonomic mechanics is completely inequivalent to nonholonomic mechanics  for the rolling coin on an inclined plane.
\end{theorem*}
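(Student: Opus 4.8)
The plan is to argue by contradiction: I would suppose that some nontrivial motion $q(t)=(x(t),y(t),\phi(t),\theta(t))$ lies simultaneously in the nonholonomic and the vakonomic solution sets, and deduce that $\theta$ must be constant. Since such a common motion is in particular a nonholonomic motion, it obeys the nonholonomic equations; equation (\ref{moeda-rolante-equacao-movimento-c}) then gives $\ddot\theta=0$, so $\theta=\Omega t+\theta_0$, and nontriviality means $\Omega\neq 0$. The motion also satisfies the rolling constraints (\ref{constraints}), so throughout I may use $\dot x=R\dot\phi\sin\theta$ and $\dot y=R\dot\phi\cos\theta$.

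The first real step is to exploit the fact that the vakonomic translational equations pin down the multipliers. Reading (\ref{moeda-rolante-equacao-movimento-vakonomic-x}) and (\ref{moeda-rolante-equacao-movimento-vakonomic-y}) as first-order equations for $\lambda_1,\lambda_2$ and integrating gives
\begin{equation}
\lambda_1=-m\dot x+c_1,\qquad \lambda_2=mg\sin\alpha\,t-m\dot y+c_2,
\end{equation}
where $c_1,c_2$ are the a priori free initial values of the multipliers. The crucial observation is that, once the motion is fixed, these expressions are forced: \emph{any} vakonomic realization of the given motion must use precisely these multipliers, up to the two constants $c_1,c_2$. Thus the question of whether the motion is vakonomic reduces to whether $c_1,c_2$ can be chosen to satisfy the two remaining equations.

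Next I would substitute these multipliers into the vakonomic rotational equation (\ref{moeda-rolante-equacao-movimento-vakonomic-theta}). Because $\ddot\theta=0$, its left-hand side vanishes, leaving $\dot\phi\,(\lambda_2\sin\theta-\lambda_1\cos\theta)=0$. Here $\dot\phi$ is analytic and is not identically zero for a nontrivial motion, since $\dot\phi\equiv0$ together with the constraints and the nonholonomic equation (\ref{moeda-rolante-equacao-para-phi}) would force $\cos\theta\equiv0$, i.e. the trivial case $\theta=\pi/2$. One may therefore divide by $\dot\phi$ on the dense open set where it is nonzero and conclude $\lambda_2\sin\theta=\lambda_1\cos\theta$ everywhere by continuity. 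Inserting the expressions for $\lambda_1,\lambda_2$ together with the constraints, the terms proportional to $mR\dot\phi\sin\theta\cos\theta$ cancel identically, and what remains is the clean relation
\begin{equation}
mg\sin\alpha\,t\,\sin(\Omega t+\theta_0)=c_1\cos(\Omega t+\theta_0)-c_2\sin(\Omega t+\theta_0).
\end{equation}

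The finishing step is a boundedness argument: for $\Omega\neq 0$ the right-hand side is bounded by $\sqrt{c_1^2+c_2^2}$, whereas the left-hand side has an envelope growing linearly in $t$ and is therefore unbounded, which is impossible unless $g\sin\alpha=0$ (excluded for a genuine inclined plane) or $\Omega=0$ (the trivial holonomic case). Hence no nontrivial nonholonomic motion can satisfy the vakonomic equations, and the two solution sets are disjoint, which is exactly complete inequivalence. I expect the only delicate points to be the justification that the translational equations force the multipliers uniquely up to $c_1,c_2$, and the licence to divide by $\dot\phi$ (both handled by the analyticity/continuity remarks); the algebraic cancellation that produces the bounded-versus-unbounded contradiction is the mechanism that makes the argument succeed, and recognizing it is the true content of the proof.
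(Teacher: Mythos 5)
Your proof is correct and follows essentially the same route as the paper: integrate the translational vakonomic equations to determine $\lambda_1,\lambda_2$ up to two constants, substitute them into the $\theta$-equation with $\ddot\theta=0$, and obtain a bounded-versus-linearly-growing contradiction when $\Omega\neq 0$ and $\alpha\neq 0$. The only cosmetic difference is that the paper avoids dividing by $\dot\phi$ by using the constraints to rewrite $R\dot\phi\sin\theta$ and $R\dot\phi\cos\theta$ as $\dot x$ and $\dot y$ (giving $\lambda_1\dot y=\lambda_2\dot x$) and then evaluating along the sequence $\Omega t_n+\theta_0=2n\pi$, whereas you cancel $\dot\phi$ via an analyticity/density argument and read the contradiction off the resulting trigonometric identity.
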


\begin{proof}
Our purpose is to prove that equations (\ref{moeda-rolante-equacao-movimento-vakonomic-x})-(\ref{moeda-rolante-equacao-movimento-vakonomic-theta}) do not possess solutions of the form (\ref{moeda-rolante-solucao-theta}), (\ref{moeda-rolante-solucao-para-phi}), (\ref{moeda-rolante-solucao-para-x}) and  (\ref{moeda-rolante-solucao-para-y}) no matter what the values of $\lambda_1(0)$ and $\lambda_2(0)$ may be.
We start by noting that  equation (\ref{moeda-rolante-equacao-movimento-vakonomic-x})   leads to
\begin{equation}
\label{inequivalence-lambda1}
 \lambda_1 (t) - \lambda_1 (0) = -m [ {\dot x}(t) - {\dot x}(0) ].
\end{equation}
Similarly, from  equation (\ref{moeda-rolante-equacao-movimento-vakonomic-y}) we get
\begin{equation}
\label{inequivalence-lambda2}
 \lambda_2 (t) - \lambda_2 (0) = -m [ {\dot y}(t) - {\dot y}(0) ] + mg t\sin\alpha .
\end{equation} 
Since ${\ddot \theta} =0$ from (\ref{moeda-rolante-solucao-theta}), equation (\ref{moeda-rolante-equacao-movimento-vakonomic-theta}) yields
\begin{equation}
\label{inequivalence-lambda1-and-2}
 \lambda_1 (t){\dot y}(t) =  \lambda_2 (t){\dot x}(t) ,
\end{equation}
where the constraints (\ref{constraints}) have been used. Insertion of (\ref{inequivalence-lambda1})  and (\ref{inequivalence-lambda2}) into (\ref{inequivalence-lambda1-and-2}) leads to
\begin{equation}
\label{inequivalence-nearly-final}
 \bigl[ \lambda_1 (0) + m{\dot x}(0)\bigr] \, {\dot y}(t) =  \bigl[\lambda_2(0) + m{\dot y}(0) + (mg \sin \alpha )t\bigr] \, {\dot x}(t).
\end{equation}
Suppose $\theta \neq \mbox{constant}$, so that $ \Omega \neq 0$. Consider the sequence $\{ t_n\} $ defined by $\Omega t_n + \theta_0 = 2n \pi$ where $n\in\mathbb{N}$. The sequence $\{ t_n\} $  is such that $t_n  \to \pm\infty$ as $n \to \infty$  according as $\Omega >0$ or $\Omega < 0$. Equations  (\ref{moeda-rolante-solucao-para-x}) and (\ref{moeda-rolante-solucao-para-y}) give
\begin{equation}
\label{x-ponto-y-ponto-tn}
{\dot x}(t_n) = \frac{2g \sin \alpha }{3\Omega}, \,\,\,\,\,\,\,\,\,\, {\dot y}(t_n) = \omega R.
\end{equation}
With these results, equation (\ref{inequivalence-nearly-final}) evaluated at $t=t_n$ reduces to
\begin{equation}
\label{inequivalence-nearly-final-tn}
 \bigl[\lambda_1 (0) + m{\dot x}(0)\bigr] \omega R =  \bigl[\lambda_2(0) + m{\dot y}(0) + (mg \sin \alpha )t_n\bigr] \frac{2g \sin \alpha }{3\Omega}.
\end{equation}
Since the left-hand side of this equation is a constant whereas its  right-hand side tends to plus  infinity as $n \to \infty$, equation (\ref{inequivalence-nearly-final})  
cannot be satisfied whatever the  values of $\lambda_1 (0)$ and $\lambda_2 (0)$ may be.
\end{proof}

\noindent {\bf Remark.} \hspace{.1cm} If $\alpha =0$,   the expression within brackets  on the right-hand side of (\ref{inequivalence-nearly-final}) no longer contains a term proportional to $t$. As a consequence, equation (\ref{inequivalence-nearly-final}) can be  satisfied by the choices $ \lambda_1 (0) = - m{\dot x}(0)$ and $\lambda_2(0) =- m{\dot y}(0)$,  in agreement with the results in \cite{Cortes1} and  \cite{Fernandez} 
  regarding the rolling coin on a horizontal plane.

Finally, it is to be noted that, for a ball rolling without slipping on a rotating table, Lewis and  Murray \cite{Lewis}  showed that for  {\it some particular initial conditions} the nonholonomic motion is not  vakonomic. Here, for the rolling coin on an inclined plane,  we have proved  that for {\it all nontrivial initial conditions} the nonholonomic motion is inaccessible  from  vakonomic  mechanics.

\section{Conclusions}\label{Conclusions} 

The proposal of vakonomic mechanics as a viable model to describe the dynamics of nonholonomic systems is fraught with   predicaments. 

From a theoretical point of view, the presence of both $\lambda_l$ and ${\dot \lambda}_l$ in the equations of motion of   vakonomic mechanics means that in order to determine the motion uniquely one has to specify $\lambda_l(0)$, the values  of the  Lagrange multipliers at the initial instant $t=0$. This implies that the well-tested Newtonian  mechanics would have to be replaced by a new mechanics containing hidden variables, whose physical meaning is mysterious. 
Moreover, this entails, given that the Lagrange multipliers are directly connected to the constraint forces, that the initial accelerations would have to be prescribed together with the initial positions and velocities, which by all available evidence is not physically sound.  Hidden-variable quantum theories are not successful as alternatives to quantum mechanics, but the lattter's probabilistic nature justifies the attempts to build them. Since macroscopic physics is deterministic, there does not appear  to be a corresponding motivation to search for  hidden variables in  
classical mechanics. Thus, it seems fair to say that the physical significance of vakonomic mechanics, if any, remains opaque \cite{Kharlamov,Flannery2,Flannery3}.

As to the actual behavior of nonholonomic systems, 
experimental studies \cite{Lewis,Kai} favor (\ref{nonholonomic-equations-of-motion}) as the correct equations of motion. The qualitave home experiment described in the present paper corroborates these findings.

If, as held by advocates  of vakonomic mechanics \cite{Arnold},    experiment alone can decide  between vakonomic and nonholonomic mechanics in each concrete case, experiment has been  speaking loudly for nonholonomic mechanics and against vakonomic mechanics.

\end{document}